\theoremstyle{plain}
\newtheorem{theorem}{Theorem}[section]
\newtheorem{proposition}[theorem]{Proposition}
\newtheorem{lemma}[theorem]{Lemma}
\theoremstyle{definition}
\newtheorem{procedure}[theorem]{Procedure}
\newtheorem{definition}[theorem]{Definition}
\newtheorem{remark}[theorem]{Remark}
\newtheorem{example}[theorem]{Example}
\newtheorem{problem}[theorem]{Problem}
\def\diag{\ensuremath{\mathrm{diag}}}
\def\End{\ensuremath{\mathrm{End}}}
\def\Rep{\ensuremath{\mathrm{Rep}}}
\def\Tr{\ensuremath{\mathrm{Tr}}}
\def\Re{\ensuremath{\mathrm{Re}}}
\def\T{\ensuremath{\mathrm{T}}}
\newcommand\cat[1]{\ensuremath{\mathbf{#1}}}
\newcounter{jamiecomment}
\newcommand\JVcomm[1]{\ensuremath{{}^{\color{red}\thejamiecomment}}\marginpar{\color{red}\tiny\raggedright \thejamiecomment: #1}\stepcounter{jamiecomment}}
\newcommand\ignore[1]{}
\begin{document}

\title{\bf Tight Reference Frame--Independent\\Quantum Teleportation}

\author{Dominic Verdon
\institute{Department of Computer Science \\ University of Oxford\\ Oxford, UK}
\email{dominic.verdon@cs.ox.ac.uk}
\and
Jamie Vicary
\institute{Department of Computer Science \\ University of Oxford\\ Oxford, UK}
\email{\quad jamie.vicary@cs.ox.ac.uk}
}
\def\titlerunning{Tight Reference Frame--Independent Quantum Teleportation}
\def\authorrunning{D.Verdon \& J. Vicary}

\ignore{\author{
\begin{tabular}{cc}
Dominic Verdon & Jamie Vicary
\\
Department of Computer Science & Department of Computer Science
\\
University of Oxford & University of Oxford
\\
\texttt{dominic.verdon@cs.ox.ac.uk}
&
\texttt{jamie.vicary@cs.ox.ac.uk}
\end{tabular}
}
}

\maketitle

\begin{abstract}
We give a tight scheme for teleporting a quantum state between two parties whose reference frames are misaligned by an action of a finite symmetry group. Unlike previously proposed schemes, ours requires no additional tokens or data to be passed between the participants; the same amount of classical information is transferred as for ordinary quantum teleportation, and the Hilbert space of the entangled resource is of the same size. In the terminology of Peres and Scudo, our protocol relies on classical communication of \emph{unspeakable} information.
\end{abstract}

\section{Introduction}

\subsection{The problem and our result}

Recently many authors have recognised the importance of developing a theory of quantum information which takes account of the reference frames by which a system's state is defined \cite{Bartlett2007,Marvian2013,Marvian2014,Kitaev2004}. It was recognised some time ago \cite{Enk2001} that a shared reference frame is a hidden implicit assumption in the conventional description of quantum teleportation, an assumption which it is reasonable to challenge, since it is clearly possible that two parties attempting teleportation across a large distance will be uncertain about their relative reference frames. Quantum teleportation is a foundational quantum protocol with important applications \cite{Gottesman1999,Gisin2002}; the problem of teleporting a quantum state between two parties who do not share a reference frame is therefore natural and important.

A successful reference frame--independent teleportation protocol is one where any observer with their own fixed reference frame will agree that Alice's quantum state has been successfully transferred to Bob, regardless of the orientation of Alice and Bob's frames; this was referred to in \cite{Chiribella2012} as `teleportation of unspeakable information'. Here we consider perfect teleportation, where Alice's state is transferred to Bob with certainty.

The effect of a change in reference frame alignment on the perceived state of a system may be encoded formally as the unitary action of a group of reference frame transformations on the Hilbert space of the system under consideration. It has been shown~\cite{Chiribella2012} that if the group of reference frame transformations contains $U(1)$ and the action is nontrivial, then reference frame--independent teleportation is impossible. In this paper we show that when the group of reference frame transformations is finite, reference frame--independent teleportation is possible in some nontrivial cases.

We will demonstrate that reference frame--independent (RFI) teleportation protocols correspond exactly to \emph{$G$-equivariant unitary error bases}, a structure we define, and develop methods of constructing these bases when they exist. In particular, we provide a simple  sufficient condition for the existence of RFI teleportation protocols for systems of dimension less than 5.

The nature of the classical channel through which the result of Alice's measurement is communicated is crucial to our new protocol. In the terminology of Peres and Scudo~\cite{Peres2002}, our protocol requires classical transmission of \emph{unspeakable information}, rather than of \emph{speakable information}. An example of unspeakable information is the choice of a direction in space, to be agreed by two parties who do not share a common directional reference; no amount of communication through a generic shared classical channel can decide the matter, but the fidelitous transfer from one party to the other of a single oriented physical system, such as an arrow, is sufficient.

We foresee two further applications of this result, both of which merit further investigation:

\begin{itemize}

\item \emph{Reference frame hiding.} Our result demonstrates that it is possible for two parties with secret reference frame configurations to perform teleportation of a quantum state without transmitting any information about those configurations, either to each other or to a third party. We have outlined such a scenario in the main example in Section 2.

\item \emph{Infinite groups of reference frame transformations.} Although reference frame--independent teleportation is impossible for an infinite Lie group, our result may be useful in the infinite case for developing protocols for imperfect quantum teleportation, a problem which has already been investigated \cite{Marzolino2015}. Firstly, it may be possible in some situations to render the group of reference frame transformations finite using limited prior communication or some other approach. Secondly, imperfect infinite-group reference frame--independent teleportation procedures may be attained as limits of perfect finite-group schemes.
\end{itemize}

The fundamental concept of a $G$-equivariant unitary error basis was developed from investigations in categorical quantum mechanics~\cite{Abramsky2009}. We characterised teleportation schemes as structures internal to the category of finite dimensional Hilbert spaces, and investigated corresponding structures in the category of unitary representations of a finite group; these corresponding structures were exactly the $G$-equivariant unitary error bases. Following their discovery, further investigation demonstrated their relevance to the problem of RFI teleportation. This exemplifies the utility of categorical quantum mechanics as a toolkit for developing new and interesting concepts in quantum information.

The outline of the paper is as follows. In Section~\ref{sec:example} we give an informal worked example of our procedure, and in Section~\ref{quantumteleportationsection} we provide a more formal presentation, along with further examples. In Section~\ref{sec:repg} we show how reference frame--independent teleportation schemes are related to ideas in categorical quantum mechanics. In Section~\ref{existenceconstruction} we prove a variety of existence, nonexistence and construction results for $G$-equivariant unitary error bases.

\subsection{Previous results}

It was demonstrated in \cite{Chiribella2012} that teleportation is impossible when the group of reference frame transformations is a infinite compact connected Lie group and the representation on the system to be teleported does not factor through a representation of a finite group. As observed in that paper, this leaves open the question of whether quantum teleportation is possible in the case of a finite group of reference frame transformations, without additional resources or prior communication. The protocol we exhibit provides an affirmative answer to this question.

A number of other solutions for the finite case have been proposed. In \cite{Chiribella2012}, it was suggested that Alice could transmit half of a maximally--entangled token state, in the regular representation, in advance of performing the protocol; the two parties could then use this to synchronise their operations. This method, however, requires Alice to be able to initialise an entangled state on a pair of systems each carrying the regular representation of the transformation group, a procedure which may be experimentally difficult or impossible.

Another relevant result can be found in \cite{Kitaev2004}, where it was shown that it is possible to perform quantum protocols using a second shared system as a quantum reference frame. This general result, intended for application to quantum cryptography, may be used to create reference frame--independent teleportation protocols. These protocols are formally identical to the token state method, but operationally more practicable; Alice and Bob simply initialise an additional shared entangled state at the same time as they create the first, take half each and use it to synchronise their operations. The problems of the token state method therefore persist in this case, although without the additional difficulty of communicating half of the second entangled state from one party to the other.

Finally, various solutions have also been proposed which use prior communication to align both parties' reference frames in advance of performing a normal teleportation protocol; see \cite{Bartlett2007}. This increases the amount of classical information that must be communicated for successful teleportation. Moreover, this procedure is not robust against changes in the alignment of Alice and Bob's reference frames, which must stay constant if the protocol is to succeed. Our protocol, in contrast, is robust against reference frame changes even while the classical message from Alice to Bob is in transit. We only require that the alignment of Bob's reference frame is constant for the short time between his receipt of the classical information and his application of the unitary correction.

However, we emphasize that our solution applies only when the system to be teleported carries a representation of the group of reference frame transformations $G$ which admits a $G$-equivariant unitary error basis. The other approaches described above do not have this limitation.

\section{Example of the procedure}
\label{sec:example}

In this section we give an informal account of the problem of reference frame--independent quantum teleportation, in the specific case where the systems are two-dimensional and the reference frame corresponds to a choice of spatial direction. This is followed by a more general and mathematically precise treatment in the next section.



Alice and Bob are quantum information theorists in separate laboratories, which do not necessarily have the same orientation in space. However, their relative orientations are not completely unknown: we are given some finite subgroup $G \subset SO(3)$, the group of rigid spatial rotations, with the promise that there is some element $g \in G$ which relates Alice's and Bob's frames. The group $G$ is common knowledge to both parties.

The task is to perform teleportation of a quantum state from Alice to Bob, without revealing their spatial orientations, either to each other or to a potential eavesdropper. There are a variety of reasons why this may be advantageous: this information may be strategically or cryptographically valuable, and hence they may prefer not to divulge it for reasons of privacy; they may prefer to conserve limited bandwidth, and hence  to not communicate redundant reference frame alignment information if it can be avoided; or they may simply be disoriented, and not aware of their own orientations.

In this example, we consider the case that $G=\mathbb{Z}_2$, meaning that their laboratories may be in one of two possible orientations which are related by a 180${}^\circ$ rotation about some given axis. We suppose that the nontrivial element $a \in G$ acts on the qubit to be teleported as follows:\footnote{It will be seen later that the specific choice of $\pi(a)$ is irrelevant, and all that determines whether reference frame--independent teleportation is possible is the isomorphism class of the representation of~$G$.}
$$
\pi(a) = \begin{pmatrix}
\sqrt{3}/2 & 1/2 \\ 1/2 & -\sqrt{3}/2
\end{pmatrix}.
$$
The fact that this matrix is nontrivial corresponds to the fact that the quantum system is not symmetric under the rotation operation.

Alice and Bob agree in advance to perform quantum teleportation  as follows. Alice will measure her initial system together with her part of the entangled state in the basis $\ket {\phi_i}$ and communicate the result to Bob, who will apply the corresponding correction $U_i$. We define $\ket {\phi_i} = (\mathbbm 1 \otimes U_i^\T) \ket \eta$, where $U_i ^\T$ denotes the transpose of the matrix $U_i$, the symbol $\mathbbm 1$ denotes the 2-by-2 identity matrix,  and $\ket\eta$ is the Bell state given above. The $U_i$ are defined as follows:
\begin{align*}
U_0 &=  \frac 1 {\sqrt{2}} \scriptsize \begin{pmatrix} 1 & 1 \\ -1 & 1 \end{pmatrix}
&
U_2 &= 
\frac 1 4 \scriptsize\begin{pmatrix} -\sqrt 2 - \sqrt 6 & - \sqrt 2 + \sqrt 6 \\ - \sqrt 2 + \sqrt 6 & \sqrt 2 + \sqrt 6 \end{pmatrix}
\\
U_1 &= \frac 1 {\sqrt{2}} \scriptsize \begin{pmatrix} 1 & -1 \\ 1 & 1 \end{pmatrix}
&
U_3 &= 
\frac 1 4 \scriptsize\begin{pmatrix} \sqrt 2 - \sqrt 6 & - \sqrt 2 - \sqrt 6 \\ - \sqrt 2 - \sqrt 6 & -\sqrt 2 + \sqrt 6 \end{pmatrix}
\end{align*}
It can readily be checked that this data forms an \emph{unitary error basis}, and so by the results of Werner~\cite{Werner2001} gives correct data for the execution of an ordinary quantum teleportation procedure for a single qubit, when the shared state is the Bell state $\ket \eta$.

If Bob's reference frame is correctly aligned with Alice's, then they are carrying out ordinary quantum teleportation, and the procedure will be successful. However, if his reference direction is upside-down with respect to Alice's, then teleportation does not proceed successfully. From Alice's perspective, Bob's correction is \textit{not} in fact the unitary $U_i$ corresponding to her measurement result, but rather the unitary $\pi(a)^\dag U_i \pi(a)$; a straightforward calculation then shows that Bob will receive a mixed state, and quantum information has been irrevocably lost. From Bob's perspective, he correctly applied the unitary $U_i$, but the teleportation failed because the measurement result $i$ that Alice communicated to him did not correspond to the state she actually measured, which was $( \mathbbm{1} \otimes \pi(a) ) \ket{\phi_i}$.


We now provide a resolution. In our new procedure, rather than communicating the two bits encoding the measurement result to Bob using their shared classical channel, Alice sends two \textit{physical objects} to Bob: arrows, of the sort a medieval archer might use. She orients these arrows according to the measurement result that she obtained, using the following encoding, where her reference direction is written as $\uparrow$:
\begin{align*}
0 &\mapsto \{ \uparrow \uparrow \}
&
1 &\mapsto \{ \downarrow \downarrow \}
&
2 &\mapsto \{ \uparrow \downarrow \}
&
3 &\mapsto \{ \downarrow \uparrow \}
\end{align*}
One at a time, she physically sends these arrows through space to Bob's laboratory.

Bob observes their local orientations and infers the measurement result 0, 1, 2 or 3 that Alice obtained. Suppose Bob's laboratory is correctly aligned with Alice's; then he will correctly infer Alice's measurement result, and he will apply the corresponding unitary correction. In this case, the two parties have executed a traditional quantum teleportation protocol, albeit one where the two classical bits of information were transferred from Alice to Bob in an unusual way.

Now we suppose that Bob's laboratory is aligned upside-down with respect to Alice's. If Alice attempts to send the message 0, 1, 2 or 3, Bob will receive it as 1, 0, 3 or 2 respectively, since the arrows will appear to him with the opposite orientations. Furthermore, just as before, when Bob applies a unitary $U_i$, its action is seen in Alice's reference frame as $\pi(a)^\dag U_i \pi(a)$. We now see the point of the entire construction: the unitary error basis is carefully chosen so that these effects cancel out. Indeed, the following equations can be easily verified:
\begin{align*}
\pi(a)^\dag U_1 \pi(a) &= U_0
&
\pi(a)^\dag U_3 \pi(a) &= U_2
\\
\pi(a)^\dag U_0 \pi(a) &= U_1
&
\pi(a)^\dag U_2 \pi(a) &= U_3
\end{align*}
As a result, the quantum teleporation will conclude successfully, even though Alice's and Bob's reference directions were misaligned.

In summary, by a careful choice of the unitary error basis, and by transferring the measurement result as unspeakable rather than speakable information, the quantum teleportation procedure can be carried out in a way which is robust against this restricted sort of reference frame error. Note in particular that only 2 bits of classical information were transferred from Alice to Bob, just as with the traditional teleportation procedure, and the Hilbert space of the entangled resource is of minimal dimension, so this procedure is \textit{tight} in the sense of Werner~\cite{Werner2001}. Also note that the unspeakable information Bob receives from Alice is uniformly random, since Alice's measurement results are; in particular, Bob receives no information during the protocol about the relative alignment between the two reference frames. Finally, it is clear that the procedure would succeed even if Bob's reference frame were constantly changing between the two alignments, as long as the alignment stays constant between Bob's receipt of the arrows and his application of the unitary correction.

\section{Mathematical description of the proposal}
\label{quantumteleportationsection}

\subsection{Traditional teleportation}

Teleportation is a well-understood procedure. It is traditionally formulated  under the assumption that both parties have aligned reference frames \cite{Bennett1993, Werner2001}. (Note that we only consider \textit{tight} quantum teleportation in this paper, where the state spaces of the initial system and of the entangled systems all have the same dimension, which is equal to the number of classical bits transferred, and the procedure succeeds with probability 1; this corresponds to the informal restriction of `no additional resources or prior communication'.) The traditional formulation is as follows.

\begin{procedure}[Teleportation without communication of unspeakable information]
\label{refteleportationopint}
Alice wants to teleport her state $\ket{\phi}$ to Bob; she has one half of a maximally--entangled bipartite state $\omega$, and Bob the other. She performs a measurement with respect to an orthonormal basis of effects $F_i$ on the bipartite system built from her initial system and her half of the entangled state. She sends the measurement result $x$ to Bob through a generic classical channel. Bob then performs a unitary operator $T_x$ on his half of the entangled state. The data $(\omega, \{F_i\},\{T_i\} )$ is \emph{correct} if Bob is guaranteed to receive the state $\ket \phi$ at the end of the procedure.
\end{procedure}

A complete description  of correct data $(\omega, \{F_i\}, \{T_i\})$ was given by Werner~\cite{Werner2001}, as follows.
\begin{definition}
For a Hilbert space $H$, a \textit{unitary error basis} is a basis of unitary operators $U_i \in B(H)$, which are orthonormal under the Hilbert-Schmidt inner product:
$$
\Tr(U_i ^\dag U_j ^{}) = \dim(H) \, \delta_{ij}
$$
\end{definition}
\begin{theorem}[Werner]
\label{Wernertighttheorem}
Up to equivalence, teleportation schemes for systems with Hilbert space $H$ are in one to one correspondence with unitary error bases on $H$.
\end{theorem}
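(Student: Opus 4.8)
The plan is to establish the bijection by giving explicit maps in both directions and checking that each inverts the other up to the relevant equivalence. Throughout I write $d = \dim(H)$ and let $\ket{\eta} = \tfrac{1}{\sqrt d}\sum_k \ket{k}\otimes\ket{k}$ denote the standard maximally--entangled vector, whose two defining features I will use repeatedly are the sliding rule $(\mathbbm{1}\otimes M)\ket{\eta} = (M^\T\otimes\mathbbm{1})\ket{\eta}$ and the partial--trace identity $\bra{\eta}(\mathbbm{1}\otimes M)\ket{\eta} = \tfrac1d\Tr(M)$.

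For the easy direction I would start from a unitary error basis $\{U_i\}$ and build the scheme $(\omega,\{F_i\},\{T_i\})$ by setting $\omega = \ket{\eta}\bra{\eta}$, $T_i = U_i$, and $F_i = \ket{\phi_i}\bra{\phi_i}$ with $\ket{\phi_i} = (\mathbbm{1}\otimes U_i^\T)\ket{\eta}$. Two checks are then needed. First, that $\{F_i\}$ is a complete measurement: the partial--trace identity gives $\braket{\phi_i|\phi_j} = \tfrac1d\Tr(U_i^\dagger U_j) = \delta_{ij}$, so the $\ket{\phi_i}$ are orthonormal, and since a UEB has exactly $d^2 = \dim(H\otimes H)$ elements they form an orthonormal basis, whence $\sum_i F_i = \mathbbm{1}$. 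Second, that the scheme is correct: applying $\bra{\phi_i}$ to $\ket{\phi}\otimes\ket{\eta}$ and using the sliding rule collapses Bob's half to $\tfrac1d U_i^\dagger\ket{\phi}$, so that after the correction $T_i = U_i$ Bob holds $\ket{\phi}$ with certainty in every branch. Both are direct computations with the two identities above.

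The substantive direction is the converse. Given correct data $(\omega,\{F_i\},\{T_i\})$, for each outcome $i$ I would consider the branch map $\mathcal{E}_i(X) = \Tr_{AB}\big[(F_i\otimes\mathbbm{1})(X\otimes\omega)\big]$ on Alice's input system, which is completely positive and has $\Tr\mathcal{E}_i(X)$ equal to the probability of outcome $i$. Correctness for every input forces $T_i\,\mathcal{E}_i(X)\,T_i^\dagger = p_i X$ for a constant $p_i$ (linearity shows the outcome probability cannot depend on the input), so $\mathcal{E}_i$ is a positive multiple of conjugation by the unitary $T_i^\dagger$. The key rigidity step is that a completely positive map proportional to a unitary conjugation has Kraus rank one with a rescaled unitary Kraus operator --- equivalently, its Choi matrix is a rank--one maximally--entangled projector. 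Translating this back through the definition of $\mathcal{E}_i$ pins down both $\omega$ and $F_i$: it forces $\omega$ to be pure with full Schmidt rank, i.e.\ maximally entangled, and each $F_i$ to be a rank--one projector $\ket{\phi_i}\bra{\phi_i}$ with $\ket{\phi_i} = (\mathbbm{1}\otimes T_i^\T)\ket{\eta}$ up to a phase, once $\omega$ has been brought to the standard form $\ket{\eta}\bra{\eta}$ by a local unitary. This is precisely where the \emph{up to equivalence} clause enters: local unitaries on the resource, outcome relabelling, and outcome phases are the freedoms to be quotiented out. Finally, the completeness relation $\sum_i F_i = \mathbbm{1}$, together with $\braket{\phi_i|\phi_j} = \tfrac1d\Tr(T_i^\dagger T_j)$, forces the $\{T_i\}$ to be Hilbert--Schmidt orthonormal and $d^2$ in number, i.e.\ a unitary error basis.

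I expect the main obstacle to be exactly this rigidity argument in the converse: extracting, from nothing more than the requirement that teleportation succeed with certainty in each branch, the conclusion that no entanglement or measurement slack is possible, so that $\omega$ must be maximally entangled and the $F_i$ rank one. The Choi--Jamio\l{}kowski isomorphism is the natural tool, reducing the claim to the elementary fact that a CP map equal to a scalar times a unitary conjugation has a single unitary Kraus operator. The remaining care is bookkeeping the equivalence relation so that the correspondence is genuinely one--to--one rather than merely surjective in each direction.
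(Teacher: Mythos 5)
The paper does not prove this statement: it is quoted as Werner's theorem with a citation to \cite{Werner2001}, and the text following it merely records the explicit form of the correspondence ($\omega$ the Bell state, $\ket{\phi_x}=\sum_i\ket i\otimes U_x\ket i$, correction $U_x^\T$). So there is no internal proof to compare against; your proposal has to be judged against Werner's original argument, which it essentially reconstructs. Your forward direction is complete and correct (the two identities for $\ket\eta$ are exactly what is needed, and the count $d^2=\dim(H\otimes H)$ upgrades orthonormality to completeness), modulo the harmless transposition of conventions relative to the paper.

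The converse is the right plan but has one genuine soft spot: the single-branch Choi-rank argument does \emph{not} by itself force $\omega$ to be maximally entangled. If $F_i$ is a rank-one projector and $\omega$ is any pure state of full Schmidt rank, the branch map $\mathcal E_i$ already has Kraus rank one, so "Choi rank one" only gets you purity of $\omega$ and rank-oneness of $F_i$; your gloss "pure with full Schmidt rank, i.e.\ maximally entangled" conflates two different conditions. Writing $\omega=\ket\Omega\bra\Omega$ with Schmidt coefficients $\lambda_k$, the unitarity of the single Kraus operator $K_i$ of each branch gives $p_i=1/\sum_k\lambda_k^{-2}$ for every $i$, and only the normalization $\sum_{i=1}^{d^2}p_i=1$ across all $d^2$ branches yields $\sum_k\lambda_k^{-2}=d^2$; combined with $\sum_k\lambda_k^2=1$, Cauchy--Schwarz forces all $\lambda_k$ equal. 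So maximal entanglement is a global consequence of tightness (exactly $d^2$ outcomes, each succeeding with certainty), not a local consequence of one branch being a scaled unitary conjugation. With that step supplied, the rest of your converse (orthonormality of the $\ket{\phi_i}$ from $\sum_iF_i=\mathbbm 1$ with $d^2$ normalized rank-one terms, hence Hilbert--Schmidt orthonormality of the $T_i$) goes through, and your bookkeeping of the equivalence relation matches Werner's.
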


\noindent
Under this correspondence, the shared entangled state $\omega$ is the Bell state $\sum_i \ket{i} \otimes \ket{i}$ for any orthonormal basis $\{\ket{0},\ket{1},\dots\}$. Alice measures in the maximally--entangled basis $\{\ket{\phi_0},\ket{\phi_1},\dots \}$, where $\ket{\phi_x} \in H \otimes H$ is defined as $\sum_i \ket{i} \otimes U_x \ket{i}$. Bob's correction for the measurement outcome $x$ is $U_x^T$.


\subsection{Reference frame--independent quantum teleportation}

We now fully describe the problem we solve.

\begin{problem}[Reference frame--independent quantum teleportation]\label{problemdescription}
Alice and Bob are spatially separated quantum information theorists capable of performing local operations and classical communication. Each party has one half of an entangled state which they created at some point in the past using a shared reference frame. Alice wants to use this entanglement to communicate a quantum state to Bob by teleportation. However, their reference frames are now misaligned; in some observer's fixed reference frame, their frame alignments will be described by the action of unknown elements $g_A,g_B$ of the group $G$ of reference frame transformations.

\begin{itemize}
\item[(i)] Is there a teleportation scheme---that is, a valid choice of measurement effects $F_x$ and corrections $T_x$---such that Procedure \ref{refteleportationopint} is guaranteed to teleport Alice's state to Bob regardless of the alignment of their reference frames?

\item[(ii)] If not, can we develop a different teleportation procedure, using only local operations and classical communication, that is guaranteed to teleport Alice's state to Bob regardless of the alignment of their reference frames?
\end{itemize} 

\end{problem}

We will now show that the answer to (i) is almost always negative. First we note the following lemma.
\begin{lemma}
\label{allmeasurementsmustbeintertwiners}
Under the conditions of Problem \ref{problemdescription}, Procedure \ref{refteleportationopint} will work for all reference frame alignments if and only if the operations $F_x$ and $T_x$ are intertwiners for the group action.
\end{lemma}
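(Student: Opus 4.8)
The plan is to characterize exactly when Procedure \ref{refteleportationopint} succeeds for \emph{all} reference frame alignments, and to show this forces the measurement effects $F_x$ and corrections $T_x$ to commute with the group action. First I would fix an observer's reference frame and describe how the misalignment enters: since Alice and Bob created their entangled state using a shared frame in the past, the state $\omega$ is fixed, but Alice's measurement effects and Bob's corrections are now applied in their own (misaligned) frames. In the observer's frame, Alice's effect $F_x$ appears conjugated by the action $\pi(g_A)$ of her frame element, and Bob's correction $T_x$ appears conjugated by $\pi(g_B)$. So the procedure with misalignment $(g_A, g_B)$ is operationally identical to the aligned procedure but with effects $\pi(g_A)^\dag F_x \pi(g_A)$ (tensored appropriately over the bipartite system) and corrections $\pi(g_B)^\dag T_x \pi(g_B)$.

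The key observation is that correctness for \emph{all} alignments means correctness for all pairs $(g_A, g_B) \in G \times G$ simultaneously, using the \emph{same} data $(\omega, \{F_x\}, \{T_x\})$. The strategy is then: for each fixed pair $(g_A, g_B)$, correctness of Procedure \ref{refteleportationopint} is equivalent, by Werner's characterization (Theorem \ref{Wernertighttheorem}), to the transformed data forming a valid teleportation scheme. Requiring this uniformly across all group elements is what pins down the intertwiner condition. Concretely, for the backward direction (``if''), I would assume $F_x$ and $T_x$ are intertwiners, so $\pi(g_A)^\dag F_x \pi(g_A) = F_x$ and $\pi(g_B)^\dag T_x \pi(g_B) = T_x$ for all group elements; then the transformed data equals the original data, which is correct by hypothesis, so the procedure succeeds for every alignment.

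The forward direction (``only if'') is where the main obstacle lies, and I would handle it by contradiction using the extremality of correct teleportation data. The idea is that correct tight teleportation schemes are, in a suitable sense, rigid: the condition that Bob deterministically recovers $\ket\phi$ for every input state and every measurement outcome is a strong constraint. If the data is correct for the aligned case ($g_A = g_B = e$) \emph{and} for some nontrivial pair, then subtracting or comparing these two conditions should force the conjugated operators to agree with the originals on the relevant subspaces. The cleanest route is to show that for a fixed correct scheme, varying $g_A$ while keeping everything else fixed produces a family of schemes, and the requirement that the measurement still yields a valid (complete, orthonormal) effect decomposition compatible with Bob's \emph{unchanged} corrections forces $\pi(g_A)$ to act trivially on each $F_x$ up to the equivalence that Werner's theorem permits. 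Pinning down exactly which equivalences are allowed, and ruling out the possibility that a nontrivial relabelling of outcomes could compensate for the conjugation, is the delicate point; I expect the argument to hinge on the fact that the group action is applied \emph{before} Alice chooses her encoding, so no relabelling is available to absorb it, leaving genuine commutation with $\pi(g)$ as the only possibility.
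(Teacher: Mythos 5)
Your proposal takes essentially the same approach as the paper: express both parties' operations in the original shared frame as conjugations $\pi(g_A)^\dagger F_x \pi(g_A)$ and $\pi(g_B)^\dagger T_x \pi(g_B)$, dispatch the ``if'' direction by invariance of the data, and for ``only if'' exploit that $g_A$ and $g_B$ vary independently, so Bob's corrections can be held fixed while Alice's frame varies. The ``delicate point'' you flag---ruling out a compensating relabelling of outcomes---is settled in the paper by precisely this independence rather than by any claim about when the encoding is chosen: a compensating relabelling of Bob's corrections would have to take the form $T_{g_A(x)}$, but Bob's operation cannot depend on $g_A$ (take $g_B=e$, so he applies $T_x$ itself), whence $T_{g_A(x)}=T_x$ and the relabelling $g_A(x)=x$ is trivial.
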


\begin{proof} 
We express the operations with reference to the original shared reference frame. Let Alice and Bob's frame shifts be described by group elements $g_A$ and $g_B$ respectively. Alice measures $F_x$ relative to her reference frame; in the original frame the operation she has performed is $\pi(g_A)^{\dagger} F_x \pi(g_A)$. She then sends the result $x$ to Bob, who performs the operation $T_x$ relative to his frame; in the original frame the operation he has performed will be $\pi(g_B)^{\dagger} T_x \pi(g_B)$. In general, the channel will therefore only work for all reference frame configurations when, for all $g_A, g_B$, $\pi(g_A)^{\dagger} F_x \pi(g_A) = F_{g_A(x)}$ and $\pi(g_B)^{\dagger} T_x \pi(g_B) = T_{g_A(x)}$, for some action of $G$ on the set of measurement outcomes. Since $\pi(e)^{\dagger} T_x \pi(e) = T_x$ for the identity $e$, this clearly implies that $g(x)=x$ for all $g$. The result follows.
\end{proof}

\noindent We now demonstrate that Procedure \ref{refteleportationopint} works only for a trivial $G$-action, rendering it inadequate for reference frame--independent teleportation in any nontrivial case.
\begin{proposition}\label{tightoldteleportationistrivial}
Procedure \ref{refteleportationopint} will only work for all reference frame alignments when $G$ acts by a global phase.
\end{proposition}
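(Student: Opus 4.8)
The plan is to use Lemma~\ref{allmeasurementsmustbeintertwiners} to reduce the claim to a statement about unitary error bases, and then extract structure from the orthonormality condition. By Lemma~\ref{allmeasurementsmustbeintertwiners}, Procedure~\ref{refteleportationopint} works for all reference frame alignments precisely when every correction $T_x$ (equivalently, by Werner's Theorem~\ref{Wernertighttheorem}, every element $U_x$ of the associated unitary error basis, up to transpose) is an intertwiner for the $G$-action; that is, $\pi(g) U_x = U_x \pi(g)$ for all $g \in G$ and all $x$. So I first translate the hypothesis into the condition that the entire unitary error basis $\{U_x\}$ lies in the commutant $\mathrm{End}_G(H)$ of the representation $\pi$.

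The key observation is then a dimension count. A unitary error basis on $H$ consists of $\dim(H)^2$ operators that are linearly independent (indeed orthonormal under the Hilbert--Schmidt inner product), so they span all of $B(H)$, which has dimension $\dim(H)^2$. If every $U_x$ lies in the commutant $\mathrm{End}_G(H)$, then their span $B(H)$ is contained in $\mathrm{End}_G(H)$, forcing $\mathrm{End}_G(H) = B(H)$. First I would make this spanning argument explicit, and then interpret the conclusion $\mathrm{End}_G(H) = B(H)$ representation-theoretically.

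The final step is to show that $\mathrm{End}_G(H) = B(H)$ forces the action to be by a global phase. Since the commutant is all of $B(H)$, every operator on $H$ commutes with $\pi(g)$ for each $g$; in particular $\pi(g)$ lies in the center of $B(H)$, which consists exactly of scalar multiples of the identity. Hence each $\pi(g) = \lambda_g \mathbbm{1}$, and unitarity forces $|\lambda_g| = 1$, so $G$ acts by a global phase. This is precisely the stated conclusion.

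I expect the main obstacle to be purely expository rather than mathematical: the substantive content is the single linear-algebra fact that $\dim(H)^2$ linearly independent operators span $B(H)$, and the representation-theoretic fact that a full-matrix-algebra commutant consists only of scalars (Schur's lemma applied to an isotypic decomposition, or directly via the center of $B(H)$). The only point requiring a little care is correctly carrying the transpose through Werner's correspondence, so that the intertwiner condition on the corrections $T_x$ is faithfully transferred to the unitaries $U_x$; since transposition preserves the property of being a multiple of the identity and the orthonormality relations, this does not affect the argument.
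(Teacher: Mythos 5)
Your proof is correct, and it shares its skeleton with the paper's: both reduce via Lemma~\ref{allmeasurementsmustbeintertwiners} and Werner's Theorem~\ref{Wernertighttheorem} to the statement that every element of the unitary error basis is an intertwiner, and both then exploit the fact that the $d^2$ Hilbert--Schmidt-orthonormal elements of a unitary error basis span all of $\End(H)$, so that the intertwiner condition propagates from the basis to every operator. Where you diverge is the final step. The paper phrases the spanning conclusion as ``the $G$-action on $\End(H)$ is trivial, hence $H\otimes H^*\simeq n\cdot\mathbbm{1}$,'' and then invokes character theory: the trivial representation appears exactly once in $V\otimes V^*$ for $V$ irreducible, so decomposing $H$ into irreducibles and counting dimensions forces all factors to be identical and one-dimensional. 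You instead observe that the commutant being all of $B(H)$ places each $\pi(g)$ in the center of the full matrix algebra, which consists of scalars; unitarity then gives a global phase. Your route is more elementary --- it needs no decomposition into irreducibles and no character computation --- and it delivers the ``global phase'' conclusion in one line, whereas the paper's argument lands first on ``direct sum of identical one-dimensional representations'' and leaves the reader to identify that with a global phase. Your remark about carrying the transpose through Werner's correspondence is the right point to flag: requiring $U_x^{\T}$ to commute with all $\pi(g)$ is equivalent to requiring $U_x$ to commute with all $\pi(g)^{\T}$, and since the $U_x$ still span $B(H)$ this forces each $\pi(g)^{\T}$, hence each $\pi(g)$, to be scalar, so the argument is unaffected. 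The paper elides this point entirely, so your version is if anything slightly more careful.
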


\begin{proof}
By Theorem \ref{Wernertighttheorem} and Proposition \ref{allmeasurementsmustbeintertwiners}, Procedure \ref{refteleportationopint} will work only if all projections $\ket{\phi_i}\bra{\phi_i}$ and corrections $U_i$ are intertwiners. By the definition of $\ket{\phi_i}$ in Theorem \ref{Wernertighttheorem}, it is sufficient that all $U_i$ be intertwiners. Let us assume that this is the case. Since the $G$-action is trivial on a basis of $\End(H)$, it must be completely trivial on $\End(H)$. Therefore we have $H \otimes H^* \simeq n \cdot \mathbbm{1}$. By straightforward character theory, there can only be one copy of $\mathbbm{1}$ in the product of an irreducible representation with its dual. Breaking $H$ up into simple factors, it follows by counting dimensions that they must all be identical and one dimensional.
\end{proof}

\subsection{Our new scheme}

In answer to (ii), we will now present our new scheme for teleportation using unspeakable information transfer.
\begin{procedure}[Teleportation with communication of unspeakable information]
\label{chargeteleportationopint}
Alice wants to teleport her state $\ket \phi$ to Bob; she has one half of a maximally-entangled bipartite state $\omega$, and Bob the other. She forms the bipartite system given by her initial system together with her half of the entangled state, and sends it to Bob through a classical channel which is decoherent in the basis $F_x$. Bob then performs a unitary operator $T_x$ on his half of the entangled state. The data $(\omega, F_x, T_x)$ is \emph{correct} if Bob is guaranteed to receive the state $\ket \phi$ at the end of the procedure.
\end{procedure}

\noindent

\begin{remark}
The key aspect of Procedure \ref{chargeteleportationopint} is that misalignment of reference frames will not affect the way Bob perceives the data arriving through a generic classical channel, but it will affect his perception of the decohered bipartite system. In other words, the information Bob receives from Alice will depend in a nontrivial way on the alignment of his reference frame.
\end{remark}

The basic data of Procedure~\ref{chargeteleportationopint} is the same as for Procedure~\ref{refteleportationopint}, and so Theorem~\ref{Wernertighttheorem} still applies. However, not all unitary error bases give rise to successful teleporation schemes under this procedure. We now investigate which of them do.

\begin{definition}[$G$-equivariant unitary error basis]
For a Hilbert space $H$ equipped with a unitary representation of $G$, a unitary error basis is \emph{$G$-equivariant} when the elements are permuted by the natural action $M \mapsto \pi(g) M \pi(g)^{\dagger}$ of $G$ on $\End(H)$. Explicitly, $\pi(g) U_i \pi(g)^{\dagger} = U_{\sigma_g(i)}$ for some permutation $\sigma_g$ of the set $\set{1,\dots,d^2}$.
\end{definition}

\begin{theorem}
Procedure \ref{chargeteleportationopint} will succeed for any reference frame misalignment $g \in G$ just when the unitary error basis $U_i$ is $G$-equivariant.
\end{theorem}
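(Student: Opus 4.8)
The plan is to compare Procedure \ref{chargeteleportationopint} directly against ordinary teleportation (Procedure \ref{refteleportationopint}), isolating the single way in which they differ: the route by which Alice's measurement outcome reaches Bob. I would work throughout in a fixed common reference frame, parametrising the misalignment by the relative group element $g$ relating Bob's frame to Alice's, so that, as in Lemma \ref{allmeasurementsmustbeintertwiners}, an operation $M$ that Bob performs in his own frame is seen in the common frame as $\pi(g)^\dagger M \pi(g)$. By Theorem \ref{Wernertighttheorem} the underlying data is still a unitary error basis, with $\omega$ the Bell state, measurement effects the maximally-entangled basis $\ket{\phi_x} = (\mathbbm 1 \otimes U_x)\ket{\omega}$, and corrections $U_x^T$; so the only thing left to analyse is the decoherent channel.

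The key step is to compute how the relative misalignment $g$ acts on the transmitted, decohered bipartite system. Because this system reaches Bob as a physical object carrying the representation of $G$, its description in Bob's frame is obtained from its description in Alice's frame by the induced group action. Under the identification of the maximally-entangled measurement basis $\{\ket{\phi_x}\}$ with the unitary error basis $\{U_x\}$ --- equivalently, under the isomorphism $\End(H) \cong H \otimes H^*$ on which $G$ acts by conjugation $M \mapsto \pi(g) M \pi(g)^\dagger$ --- this action sends the outcome $U_x$ to $\pi(g) U_x \pi(g)^\dagger$. I would then argue that Bob can recover a \emph{definite} classical outcome from the decohered system precisely when this action permutes the basis, that is, when $\pi(g) U_x \pi(g)^\dagger = U_{\sigma_g(x)}$ for some permutation $\sigma_g$, which is exactly $G$-equivariance. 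This single observation handles both implications: if the basis is not permuted for some $g$, then the transformed measurement state is a genuine superposition of basis elements, decoherence yields a mixture of outcomes, Bob's correction is ambiguous, and he is left with a mixed state, so the procedure fails; if it is permuted, Bob reads the shifted outcome $\sigma_g(x)$.

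To finish the sufficiency direction I would verify that reading the shifted outcome and applying its correction in Bob's misaligned frame reproduces correct teleportation. The point is that the two effects of the misalignment are both governed by conjugation by $\pi(g)$: Bob reads the shifted outcome $\sigma_g(x)$ and applies the associated correction in his own frame, so that in the common frame he has performed the conjugate of that correction, and the equivariance relation identifies this conjugate with exactly the correction that aligned teleportation prescribes for Alice's true outcome $x$. Werner's analysis (Theorem \ref{Wernertighttheorem}) then guarantees that Bob receives $\ket{\phi}$ with certainty. This is precisely the cancellation exhibited in the worked example of Section \ref{sec:example}.

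The main obstacle I anticipate is the reference-frame bookkeeping: maintaining a consistent convention for how states, measurement effects and corrections transform under a change of frame, and in particular correctly identifying the induced action on the transmitted system as the adjoint (conjugation) action on $\End(H)$ rather than the naive action on $H \otimes H$. Getting this identification right, including the attendant transpose and complex-conjugation details in the correction, is what makes the relevant condition the conjugation relation $\pi(g) U_x \pi(g)^\dagger = U_{\sigma_g(x)}$ of the definition, and hence what ties the success of the procedure cleanly to $G$-equivariance; once this is in place, the remaining computations reduce to the same linear algebra that underlies ordinary teleportation.
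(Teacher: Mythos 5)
Your proposal is correct and follows essentially the same route as the paper's own proof: reduce to Werner's classification (Theorem \ref{Wernertighttheorem}), show for necessity that Alice's decoherence basis and Bob's measurement basis must coincide as unordered sets (the paper phrases this via the channel identity $M_1 \circ M_2 = M_1$, you via superposition-of-outcomes giving a mixed state, which is the same fact), and show for sufficiency that the permuted outcome and the conjugated correction cancel by equivariance. Your explicit attention to the conjugation action $U_x \mapsto \pi(g)U_x\pi(g)^\dagger$ on $\End(H)$ versus the naive action on $H \otimes H$ is, if anything, more careful than the paper's treatment, which leaves that identification implicit.
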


\begin{proof}
We again work in Alice and Bob's original lab frame. Alice decoheres in the orthonormal basis $\{ \pi(g_A) \ket{\phi_0},  \pi(g_A) \ket{\phi_1}, \dots\}$. Bob then measures in the orthonormal basis $\{ \pi(g_B) \ket{\phi_0},  \pi(g_B) \ket{\phi_1}, \dots\}$, and, depending on his measurement outcome $x$, performs the corresponding correction $\pi(g_B) U^{T}_{x} \pi(g_B) ^{\dagger}$.

We first note that, putting Alice's decoherence and Bob's measurement together as one operation, we get a teleportation scheme under Definition~\ref{refteleportationopint}. Therefore, by Theorem \ref{Wernertighttheorem}, Alice's decoherence operation followed by Bob's measurement must be a measurement in some orthonormal basis of maximally--entangled states; clearly that must be the basis that Bob measures in. Letting Bob's measurement channel be $M_1$ and Alice's decohering channel be $M_2$, it follows that $M_1 \circ M_2 = M_1$; this can clearly only be true if the projection basis for $M_2$ is the same as the projection basis for $M_1$. We therefore see that the basis $\{ \pi(g_A) \ket{\phi_0},  \pi(g_A) \ket{\phi_1}, \dots\}$ must be some reordering of the basis $\{ \pi(g_B) \ket{\phi_0},  \pi(g_B) \ket{\phi_1}, \dots\}$, for all $g_A, g_B$. This is exactly $G$-equivariance of the UEB $U_i$.

We now demonstrate that this condition is sufficient to guarantee success for Procedure \ref{chargeteleportationopint}. Suppose $U_i$ is $G$-equivariant. Then Alice's decohering operation is exactly the same as it would have been if her reference frame had not shifted at all. Bob measures \emph{and} performs the correction; the correction therefore corresponds to the measurement and the result follows.
\end{proof}

In Procedure \ref{chargeteleportationopint}, we have specified that Alice send the decohered bipartite system itself, since this is always theoretically possible. However, in practise it may be experimentally more practicable to use some other means of classically communicating unspeakable information; the important thing is that the classical data should \emph{itself} carry the same $G$-action as the corresponding $G$-equivariant measurement basis. An example was given in Section~\ref{sec:example}, where Alice's measurement result was encoded in the spatial orientations of some physical objects. In order to demonstrate that this approach is applicable to other types of reference frame uncertainty, we provide two further examples of unspeakable encodings of classical information.

\begin{example}
\begin{enumerate}
\item[(i)] \emph{Time.}\, Suppose the computational basis states of Alice and Bob's systems are nondegenerate energy eigenstates (for example, eigenstates of the photon number operator). Here they will need to share a time reference. Let the time translation operator $U(t)$ have periodicity $U(t+T) = U(t)$ for some $T$. Suppose that the group $U(1)$ of time translations has been discretised to some cyclic subgroup $\mathbb{Z}_n$ of translations by $T/n$. Alice and Bob's reference frame configurations will correspond to their zeroes of time. Signals sent by Alice to to Bob which arrive, according to her reference frame, at time $m_AT/n$, arrive for Bob at a different time $m_BT/n$, depending on the difference between their reference frames. By encoding her measurement result in the time of arrival of signals, Alice may construct $G$-equivariant teleportation protocols.

\item[(ii)] \emph{Circular polarisation.}\, Suppose Alice and Bob are working with photonic qubits whose computational basis states are left and right circular polarisation. In this case, the group of reference frame transformations will correspond to planar rotations of the axes perpendicular to the propagation direction. Suppose that the group $U(1)$ of planar rotations has been discretised to some cyclic subgroup $\mathbb{Z}_n$ of rigid rotations by multiples of $2\pi/n$, and that Alice can classically communicate linearly polarised light to Bob. By communicating frame configurations using beams of linearly polarised light, Alice may encode measurement results in the angle difference between Bob's frame and the commmunicated frame, allowing her to construct $G$-equivariant teleportation protocols.
\end{enumerate}
\end{example}

\ignore{\begin{remark}
\JVcomm{This is interesting but not essential I\ think; we could drop it if we need the space.}This procedure, in addition to teleporting Alice's state, also allows Bob to deduce information about the relative alignment of his and Alice's reference frames. In particular, if Alice is able to measure the system without destroying it, and classically communicates to Bob her readout in addition to the decohered state, Bob, if he knows the permutation action of $G$ on the unitary error basis, can compare that result to his own; the element of the group describing the transformation from Alice's frame to his own must permute those two measurement results. 
\end{remark}
}

\section{Classical structures in $\Rep(G)$}
\label{sec:repg}

Teleportation in the context of a finite group $G$ can be described elegantly in the framework of \textit{categorical quantum mechanics}~\cite{Abramsky2009}. One key strategy in this research programme is to understand features of quantum information in terms of the category \cat{FHilb} of finite-dimensional Hilbert spaces and linear maps, and then to generalize them by applying them in different categories. The concept of $G$-equivariant quantum teleportation arises by understanding the categorical structure of the traditional quantum teleportation procedure, and then applying it in $\cat{Rep}(G)$, as we now explore. This technical section of the paper will make use of well-known ideas from categorical quantum mechanics, of which full details are available in the provided references.

The following definition gives our abstract categorical description of quantum teleportation, in terms of classical structures in a symmetric monoidal category~\cite{Coecke2013}.
\begin{definition}
In a dagger-compact category, a \textit{quantum teleportation procedure} on an object $A$ with a right dual is a classical structure on the object $A \otimes A^*$, satisfying the following condition, where $c$ is some scalar:
\tikzset{cqm/.style={line width=0.7pt}}
\begin{equation}
\begin{aligned}
\begin{tikzpicture}[cqm]
\draw (0,-0.5) to [out=up, in=down] (0,1);
\draw (1,1) to (1,0.5) to [out=down, in=down, looseness=1] (3,0.5) to (3,2.5) to [out=up, in=up, looseness=1.5] (2,2.5) to (2,2);
\draw [fill=blue!10] (-1.25,1) rectangle +(3.5,1);
\draw (1,2) to +(0,1.5);
\draw (0,2) to +(0,1.5);
\draw (-1,2) to +(0,1.5);
\node at (0.5,1.5) {comultiplication};
\draw [->] (0,0) to (0,0.25);
\draw [->] (1,1) to (1,0.5);
\draw [->] (3,1) to (3,1.5);
\draw [->] (3,1) to (3,1.5);
\draw [->] (-1,2) to +(0,0.75);
\draw [-<] (0,2) to +(0,0.75);
\draw [->] (1,2) to +(0,0.75);
\draw [-<] (2,2) to +(0,0.5);
\end{tikzpicture}
\end{aligned}
\quad=\quad
c\,\,\,\,\cdot\,\,\,\,
\begin{aligned}
\begin{tikzpicture}[cqm]
\draw (1,-0.5) to [out=up, in=down] (1,3.5);
\draw [fill=blue!10] (-1.25,1) rectangle +(1.5,1);
\draw (0,2) to +(0,1.5);
\draw (-1,2) to +(0,1.5);
\node at (-0.5,1.5) {unit};
\draw [->] (-1,2) to +(0,0.75);
\draw [-<] (0,2) to +(0,0.75);
\draw [->] (1,0) to +(0,1.5);
\end{tikzpicture}
\end{aligned}
\end{equation}
\end{definition}

\noindent
This definition is motivated by the following theorem; recall Werner's Theorem~\ref{Wernertighttheorem}.
\begin{theorem}
Quantum teleportation procedures in \cat{FHilb} correspond precisely to unitary error bases.
\end{theorem}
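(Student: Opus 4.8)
The plan is to reduce the statement to the known correspondence between classical structures and orthonormal bases, and then to show that the extra equation in the definition is exactly a unitarity condition on that basis.

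First I would invoke the standard fact that, in \cat{FHilb}, classical structures on a Hilbert space $K$ are in bijection with orthonormal bases of $K$: the comultiplication copies basis vectors and the unit is their sum. Applying this with $K = A \otimes A^* \cong \End(H)$, equipped with the Hilbert--Schmidt inner product, every classical structure on $A \otimes A^*$ yields an orthonormal family $\{E_i\} \subset \End(H)$ with $\Tr(E_i^\dagger E_j) = \delta_{ij}$, comultiplication $\delta(E_i) = E_i \otimes E_i$, and unit $u = \sum_i E_i$. This part is already a bijection before the displayed equation is imposed, so the entire content of the theorem is to determine which bases satisfy that equation.

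Second, and this is the crux, I would translate the string-diagram equation into a statement about the $E_i$. The comultiplication is ``copy in the $E$-basis,'' while the cups and caps are the compact-closed duality of $A$, which implements composition and the adjoint in $\End(H)$; the ``unit'' box on the right is $u = \sum_i E_i$. Reading the left-hand side as a map $A \to A \otimes A^* \otimes A$, the loop contracts the $A^*$-leg of the input of $\delta$ with the $A^*$-leg of one output copy, which is a partial trace over $H$. A direct index computation then gives left-hand side $= \sum_i E_i \otimes (E_i E_i^\dagger)$ and right-hand side $= c \cdot \big(\sum_i E_i\big) \otimes \mathrm{id}_H$, with the operator $E_iE_i^\dagger$ (respectively $\mathrm{id}_H$) occupying the input-to-output $A$-legs. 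Since $\{E_i\}$ is a basis of $\End(H)$, I can compare the coefficient of each $E_i$ on the two sides, obtaining $E_i E_i^\dagger = c\,\mathrm{id}_H$ for every $i$, with one shared scalar $c$. Taking traces and using $\Tr(E_i^\dagger E_i) = 1$ forces $c = 1/\dim(H) > 0$, so each $U_i := \sqrt{\dim(H)}\,E_i$ is unitary and $\Tr(U_i^\dagger U_j) = \dim(H)\,\delta_{ij}$; that is, $\{U_i\}$ is a unitary error basis. Conversely, starting from a unitary error basis $\{U_i\}$, the normalized family $\{U_i/\sqrt{\dim(H)}\}$ is an orthonormal basis of $\End(H)$ and hence defines a classical structure, and unitarity makes the equation hold with $c = 1/\dim(H)$; the two constructions are mutually inverse, giving the claimed bijection.

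I would flag the interpretation of the diagram as the main obstacle: one must correctly identify the comultiplication as copying in the $E$-basis, recognize that the cup and cap are the duality maps producing $E_iE_i^\dagger$ rather than a transpose, and track daggers and conjugates carefully so that the contraction genuinely yields the adjoint. Because $\End(H)$ is finite-dimensional, $E_iE_i^\dagger = c\,\mathrm{id}$ is equivalent to $E_i^\dagger E_i = c\,\mathrm{id}$, so the precise convention does not affect the conclusion; the step that genuinely uses both the Frobenius/duality structure and the orthonormality is verifying that the scalar $c$ comes out uniform across $i$ and positive.
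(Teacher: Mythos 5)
Your proof is correct; however, there is nothing in the paper to compare it against, since the paper states this theorem without proof, presenting it as motivation for the definition and deferring to the categorical quantum mechanics literature. Your argument is the natural way to fill that gap: you invoke the standard correspondence between classical structures in $\cat{FHilb}$ and orthonormal bases, apply it to $A \otimes A^* \cong \End(H)$ with the Hilbert--Schmidt inner product (these do agree, since $\langle \,\ket{a}\bra{b}, \ket{c}\bra{d}\, \rangle_{A\otimes A^*} = \Tr\big((\ket{a}\bra{b})^\dagger \ket{c}\bra{d}\big)$), and then show that the displayed diagrammatic condition amounts to $\sum_i E_i \otimes E_i^{} E_i^\dagger = c \sum_i E_i \otimes \mathrm{id}_H$, whence comparing coefficients in the basis $\{E_i\}$ gives $E_i^{} E_i^\dagger = c\,\mathrm{id}_H$ with the single scalar $c = 1/\dim(H)$ forced by taking traces, so that $U_i = \sqrt{\dim(H)}\,E_i$ is a unitary error basis; the converse direction and mutual inverseness are handled correctly. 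Your flagged caveat is also the right one: the precise form of the contraction ($E_i^{}E_i^\dagger$ versus $E_i^\dagger E_i^{}$ or their transposes) depends on orientation conventions in the diagram, but in finite dimension any of these being a positive multiple of the identity is equivalent to unitarity of the rescaled operator, so the conclusion is convention-independent.
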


We now summarize the application of these ideas in a group representation category.
\begin{definition}
For a group $G$, the dagger-compact category $\cat{Rep}(G)$ has objects given by unitary representations of $G$, morphisms given by intertwiners, and a dagger-compact structure inherited from the underlying Hilbert spaces.
\end{definition}
\begin{theorem}
Quantum teleportation procedures in $\cat{Rep}(G)$ correspond precisely to $G$-equivariant unitary error bases.
\end{theorem}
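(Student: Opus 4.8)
The plan is to exploit the forgetful functor $U \colon \cat{Rep}(G) \to \cat{FHilb}$, which discards the group action while retaining the underlying Hilbert spaces and linear maps. This functor is faithful and preserves the entire dagger-compact structure---tensor products, daggers, and the cups and caps witnessing duality---so it carries classical structures to classical structures and respects the scalar condition defining a quantum teleportation procedure (note that scalars in $\cat{Rep}(G)$ are endomorphisms of the trivial representation, which $U$ sends to the same complex numbers). Consequently, a quantum teleportation procedure in $\cat{Rep}(G)$ on $A \otimes A^*$ descends under $U$ to a quantum teleportation procedure in $\cat{FHilb}$ on the underlying space, which by the preceding theorem is exactly a unitary error basis $\{U_i\}$ on the underlying Hilbert space of $A$. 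The entire content of the statement is thus to identify which additional data promotes such a UEB to a procedure living in $\cat{Rep}(G)$, and to show this is precisely $G$-equivariance.

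First I would recall that a classical structure in $\cat{FHilb}$ is determined by its set of \emph{copyable states}, which form an orthonormal basis; under the canonical isomorphism $A \otimes A^* \cong \End(A)$ these copyable states correspond, up to normalisation, to the unitary error basis elements $U_i$. The crucial observation is that the copyable states are intrinsic to the classical structure---they are the states copied by the comultiplication---so any symmetry of the classical structure must permute them. Now a quantum teleportation procedure in $\cat{Rep}(G)$ is one whose comultiplication, unit, and their daggers are all intertwiners; equivalently, the $G$-action $\pi(g) \otimes \overline{\pi(g)}$ on $A \otimes A^*$ is an automorphism of the classical structure. Combining these, the $G$-action must permute the copyable states, and hence the $U_i$.

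Next I would check that the $G$-action on $A \otimes A^*$ translates, under the isomorphism $A \otimes A^* \cong \End(A)$, exactly into the conjugation action $M \mapsto \pi(g) M \pi(g)^{\dagger}$ appearing in the definition of $G$-equivariance. This is a direct computation with the cups and caps of $\cat{Rep}(G)$, in which the dual representation $A^*$ carries the conjugate action. Granting this, the statement that the $G$-action permutes the copyable states is literally the statement that conjugation permutes the $U_i$, which is $G$-equivariance. The converse is then immediate: given a $G$-equivariant UEB, the resulting classical structure in $\cat{FHilb}$ has its copyable states permuted by the $G$-action, so the $G$-action is an automorphism, forcing every structure map to be an intertwiner and hence lifting the procedure to $\cat{Rep}(G)$.

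The main obstacle I anticipate is the bookkeeping in this last translation: one must be careful about the precise form of the $G$-action on the dual object $A^*$ and on $\End(A)$, and verify that ``all structure maps are intertwiners'' is genuinely equivalent to ``the $G$-action is an automorphism of the classical structure permuting copyable states'', rather than merely implied in one direction. Everything else follows formally from functoriality of $U$ together with the intrinsic characterisation of copyable states, so once the action is correctly identified the equivalence is forced.
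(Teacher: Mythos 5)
The paper states this theorem without proof---it is presented as an immediate consequence of the categorical framework---so there is no argument of the paper's own to compare against; your proposal in effect supplies the omitted proof, and it is correct. The key steps all check out. Since the morphisms of $\cat{Rep}(G)$ are exactly the intertwiners and its dagger-compact structure is inherited from $\cat{FHilb}$, a teleportation procedure on $A \otimes A^*$ lies in $\cat{Rep}(G)$ precisely when its comultiplication and counit are intertwiners, which is the same as saying that each $\pi(g) \otimes \pi^*(g)$ is an automorphism of the underlying classical structure in $\cat{FHilb}$; automorphisms permute copyable states, the copyable states are (up to normalisation) the $U_i$ under $A \otimes A^* \cong \End(A)$, and transporting $\pi(g) \otimes \pi^*(g)$ along this isomorphism gives exactly the conjugation action $M \mapsto \pi(g) M \pi(g)^\dagger$, because the dual of a unitary representation is its contragredient. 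The converse direction also works as you describe: a permutation of the copyable states extends linearly to a map commuting with the comultiplication and preserving the counit, so all structure maps (and their daggers, since the representations are unitary) are intertwiners, and the scalar condition transfers automatically because equations between intertwiners hold in $\cat{Rep}(G)$ if and only if they hold in $\cat{FHilb}$. Your argument is also consonant with the paper's own framing---Remark \ref{gequebsaregeqobs} already records that $G$-equivariant UEBs are $G$-equivariant orthonormal bases of $\End(H)$---and the one piece of bookkeeping you flagged, the identification of the action on the dual object, resolves exactly as you anticipated. I see no gap.
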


\noindent
Finally, we observe that the constructions of unitary error bases in Theorem \ref{Hadamardconstruction} and Remark \ref{QLSremark} carry over straightforwardly to the $G$-equivariant setting because they are essentially categorical constructions; the Hadamard construction, for instance, is defined in terms of two special commutative dagger Frobenius algebras and an isomorphism between them. In {\bf Rep($G$)}, this reduces exactly to the intertwining Hadamard matrix and $G$-equivariant orthonormal basis of Theorem \ref{Hadamardconstruction}. In this sense, these constructions are much more natural than, for instance, the construction of unitary error bases using projective group representations \cite{Knill1996}; indeed, it is difficult to see how the latter construction could be brought into the $G$-equivariant framework.

\section{Existence and construction of RFI teleportation protocols}
\label{existenceconstruction}

We have demonstrated that $G$-equivariant UEBs are exactly the structures we need to perform reference frame--independent teleportation protocols, but it is still unclear how to construct them for a given representation $H$, if they exist at all. We cannot hope for a general classification of $G$-equivariant UEBs, since there is not even a classification in the case where the $G$-action is trivial\footnote{The problem of classifying UEBs is closely related to the difficult problem of classifying Hadamard matrices~\cite{SzollHosi2011}.}, although many construction methods exist~\cite{Werner2001, Musto2015, Knill1996}. In this section we will demonstrate that $G$-equivariant unitary error bases need not exist on every representation, meaning that RFI teleportation is not always possible.  We will then demonstrate that several UEB constructions carry over naturally to the $G$-equivariant setting, allowing us to construct RFI teleportation protcols for a wide variety of systems.

We begin with a definition.

\begin{definition}
A \emph{$G$-equivariant orthonormal basis} for some representation $V$ is an orthonormal basis of $V$ whose elements are permuted by the action of $G$.
\end{definition}

\begin{remark}\label{gequebsaregeqobs}
$G$-equivariant unitary error bases are $G$-equivariant orthonormal bases of $\End(H) \simeq H \otimes H^*$, all of whose elements are unitary maps.
\end{remark}

\noindent
It will transpire that we can use $G$-equivariant orthonormal bases on $H$ to construct $G$-equivariant UEBs for $H$. Moreover, if we prove that there are no $G$-equivariant orthonormal bases on $\End(H)$, it follows by Remark \ref{gequebsaregeqobs} that there will be no $G$-equivariant UEBs for $H$; we will use this fact to demonstrate that RFI teleportation protocols need not always exist. Our first step is therefore a classification of $G$-equivariant orthonormal bases.

\subsection{A classification of $G$-equivariant orthonormal bases}\label{geqobclassification}

We begin with a simple lemma. Let {\bf $G$-Set} be the category whose objects are sets carrying an action of $G$, and whose morphisms are $G$-equivariant functions between them. Then there exists a functor $\mathcal{M}: \textbf{$G$-Set} \to  \textbf{Rep($G$)}$, which, given a $G$-set, constructs the free Hilbert space on its elements, and extends the $G$-action and morphisms linearly.

\begin{lemma}
$G$-equivariant orthonormal bases exist only on representations isomorphic to those in the image of $\mathcal{M}$.
\end{lemma}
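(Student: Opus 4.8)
The plan is to prove this by a direct construction: I would show that the vectors of a $G$-equivariant orthonormal basis are already closed under the group action, so that they constitute a $G$-set whose image under $\mathcal{M}$ recovers the representation. First I would fix a representation $V$ with representation homomorphism $\pi : G \to U(V)$ admitting a $G$-equivariant orthonormal basis $\{e_1, \dots, e_n\}$. By definition, for each $g \in G$ there is a permutation $\sigma_g$ of $\{1, \dots, n\}$ with $\pi(g) e_i = e_{\sigma_g(i)}$; the crucial feature is that the action permutes the basis vectors \emph{exactly}, introducing no phases.

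The key step is then to observe that the set $S = \{e_1, \dots, e_n\}$, equipped with $g \cdot e_i := \pi(g) e_i$, is a genuine object of \textbf{$G$-Set}. The assignment stays within $S$ by the permutation property, and the defining relation $\sigma_{gh} = \sigma_g \circ \sigma_h$ is inherited directly from the fact that $\pi$ is a homomorphism, since $e_{\sigma_{gh}(i)} = \pi(gh) e_i = \pi(g)\pi(h) e_i = e_{\sigma_g(\sigma_h(i))}$. Applying $\mathcal{M}$ to this $G$-set yields the Hilbert space freely spanned by an orthonormal family $\{\hat{e}_i\}$ indexed by the elements of $S$, carrying the linearly-extended permutation action. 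I would finish by exhibiting the linear map $U : \mathcal{M}(S) \to V$ determined by $\hat{e}_i \mapsto e_i$: it is unitary because it carries one orthonormal basis to another, and $G$-equivariant by the very definitions of the two actions, so $V \cong \mathcal{M}(S)$ in \textbf{Rep($G$)}, as required.

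I do not expect a genuine obstacle, since the statement is essentially definitional once one recognises that a basis permuted without phases is itself a $G$-set. The single point deserving care — and the reason the result would fail under a weaker hypothesis — is precisely the absence of phases: were the action permitted to permute the basis only up to scalars, the vectors would no longer form a $G$-set, with any nontrivial one-dimensional representation providing the simplest counterexample. I would therefore flag explicitly that the definition of $G$-equivariant orthonormal basis excludes such phases, as this is exactly what makes the construction go through.
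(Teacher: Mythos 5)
Your proof is correct and is exactly the argument the paper gives, merely spelled out in full: the paper's proof is the one-line observation that the basis, being permuted by $G$, is itself a $G$-set whose image under $\mathcal{M}$ recovers the representation. Your added remark about the absence of phases is a sensible clarification but does not change the route.
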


\begin{proof}
Immediate, since a $G$-equivariant orthonormal basis has an underlying Hilbert space isomorphic to the free Hilbert space on the elements of the chosen basis, which $G$ acts on by permutations.
\end{proof}

We begin by presenting a simple classification of $G$-sets due to Burnside~\cite{Burnside2004}.

\begin{definition}

Given two $G$-sets $(X_1, \sigma_1)$ and $(X_2,\sigma_2)$, their \emph{disjoint union} $(X_1 \sqcup X_2, \sigma_1 \sqcup \sigma_2)$ is the disjoint union of $X_1$ and $X_2$ as sets with the natural induced action.
\end{definition}

\begin{definition}

 Given a subgroup $H$ of $G$, the \emph{coset space} $(G/H, \sigma_H)$ is the $G$-set whose elements are the cosets of $H$ in $G$, and whose $G$-action $\sigma_H$ is the natural action of $G$ by left multiplication on those cosets. 

\end{definition}
\begin{lemma}\label{Gsetdecomp}
Any $G$-set is isomorphic to a disjoint union of coset spaces. Two coset spaces are isomorphic as $G$-sets if and only they correspond to conjugate subgroups.
\end{lemma}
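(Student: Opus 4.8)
The statement to prove is the classical theorem of Burnside classifying transitive $G$-sets and their decomposition, so the plan is to establish two things: first, that every $G$-set decomposes as a disjoint union of transitive $G$-sets, each of which is isomorphic to a coset space $G/H$; and second, that the isomorphism classification of coset spaces is given by conjugacy of subgroups. I would prove these two halves separately, since the first is about orbit decomposition and the second is a morphism-counting argument.

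For the decomposition, I would first partition any $G$-set $X$ into its orbits under the $G$-action; since orbits are disjoint and $G$-invariant, $X$ is the disjoint union of its orbits, so it suffices to treat a single transitive $G$-set. Fixing a point $x$ in a transitive $G$-set, I would let $H = \mathrm{Stab}(x)$ be its stabiliser subgroup and define a map $G/H \to X$ sending the coset $gH$ to $g \cdot x$. The key checks are that this is well-defined (if $gH = g'H$ then $g^{-1}g' \in H$ fixes $x$, so $g \cdot x = g' \cdot x$), injective (by the same equivalence run backwards), surjective (by transitivity), and $G$-equivariant (immediate from the definitions). This exhibits any transitive $G$-set as a coset space.

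For the isomorphism classification, I would show that $G/H_1 \cong G/H_2$ as $G$-sets if and only if $H_1$ and $H_2$ are conjugate. For the easy direction, if $H_2 = gH_1 g^{-1}$, the map $aH_1 \mapsto agg^{-1}H_2$, or more cleanly the map induced by right multiplication, furnishes an explicit $G$-equivariant bijection. For the converse, suppose $\psi: G/H_1 \to G/H_2$ is a $G$-isomorphism; let $\psi(eH_1) = g H_2$ for some $g \in G$. Equivariance forces every element of $H_1$ to fix the coset $gH_2$, which means $H_1 \subseteq g H_2 g^{-1}$; applying the same argument to $\psi^{-1}$ yields the reverse inclusion, so $H_1 = gH_2g^{-1}$ and the subgroups are conjugate.

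I do not expect any single step to be a serious obstacle, as this is standard orbit–stabiliser material; the only point requiring a little care is the converse direction of the conjugacy classification, where one must track the stabiliser of the image point correctly and invoke equivariance in both directions to pin down conjugacy rather than mere subconjugacy. The argument that $\mathrm{Stab}(g \cdot x) = g\, \mathrm{Stab}(x)\, g^{-1}$ is the technical fact underlying this, and making sure the bijection intertwines the actions on the nose—rather than up to some twist—is where I would concentrate the verification.
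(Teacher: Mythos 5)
Your proof is correct, and it is in fact more informative than the paper's, which simply cites Burnside's book and gives no argument at all. What you have written is the standard orbit--stabiliser proof that any such reference would contain: orbit decomposition reduces to the transitive case, the map $gH \mapsto g \cdot x$ identifies a transitive $G$-set with the coset space of a point stabiliser, and stabilisers of points in the same orbit are conjugate, which pins down the isomorphism classification. Your handling of the converse direction (using equivariance of both $\psi$ and $\psi^{-1}$ to get the two inclusions $H_1 \subseteq gH_2g^{-1}$ and $gH_2g^{-1} \subseteq H_1$) is exactly right and is indeed the one place where care is needed, since equivariance of $\psi$ alone only gives subconjugacy.

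One small slip: in the easy direction you write the map as $aH_1 \mapsto agg^{-1}H_2$, which literally collapses to $aH_1 \mapsto aH_2$; that map is \emph{not} well-defined when $H_1 \neq H_2$. The map you want (and clearly intend, given your remark about right multiplication) is $aH_1 \mapsto ag^{-1}H_2$ when $H_2 = gH_1g^{-1}$: well-definedness follows since $a^{-1}a' \in H_1$ implies $(ag^{-1})^{-1}(a'g^{-1}) = ga^{-1}a'g^{-1} \in gH_1g^{-1} = H_2$, and equivariance is immediate because right multiplication commutes with the left $G$-action. With that formula corrected, the proof is complete.
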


\begin{proof}
See \cite{Burnside2004}.
\end{proof}

\begin{remark}
In modern language, Lemma \ref{Gsetdecomp} states that {\bf $G$-Set} is a semisimple fusion category whose simple objects correspond to conjugacy classes of subgroups in $G$. It is easy to see that the functor $\mathcal{M}$ is additive; the disjoint union of two $G$-sets will be sent under $\mathcal{M}$ to the direct sum of their corresponding representations. In order to classify all objects in the image of $\mathcal{M}$, therefore, it is sufficient to find the image of the coset spaces under $\mathcal{M}$. We will call those representations the \emph{basic permutation representations}.
\end{remark}

In order to identify the basic permutation representations, we now state an obvious but critical lemma regarding the character of the permutation representation induced by $\mathcal{M}$ on a $G$-set.

\begin{lemma}
Given a $G$-set $(X,\sigma)$, let $\chi: G \to \mathbb{R}$ be the character of $\mathcal{M}(X,\sigma)$. Then the following holds:
\begin{equation}
\chi(g) = |\Set{x \in X | g \cdot x = x } |
\end{equation}
\end{lemma}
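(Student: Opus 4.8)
The plan is to compute the character $\chi(g)$ directly from the definition of the permutation representation $\mathcal{M}(X,\sigma)$. Recall that $\chi(g) = \Tr(\pi(g))$, where $\pi(g)$ is the linear operator on the free Hilbert space on $X$ induced by the $G$-action $\sigma$. First I would fix the orthonormal basis $\{\ket{x} : x \in X\}$ of $\mathcal{M}(X,\sigma)$ indexed by the elements of $X$, and recall that the functor $\mathcal{M}$ extends the $G$-action linearly, so that $\pi(g)\ket{x} = \ket{g \cdot x}$ for each $x \in X$.

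Next I would express the trace in this basis as a sum of diagonal matrix entries, namely $\Tr(\pi(g)) = \sum_{x \in X} \bra{x}\pi(g)\ket{x} = \sum_{x \in X} \braket{x | g \cdot x}$. Since the basis is orthonormal and indexed by $X$, the inner product $\braket{x | g \cdot x}$ equals $1$ when $g \cdot x = x$ and $0$ otherwise. The sum therefore counts exactly the number of elements of $X$ fixed by $g$, which gives the claimed formula
\begin{equation}
\chi(g) = |\Set{x \in X | g \cdot x = x}|.
\end{equation}

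This lemma is essentially a restatement of the standard fact that the character of a permutation representation counts fixed points, so there is no real obstacle; the only point requiring care is to confirm that $\mathcal{M}$ sends the $G$-set $(X,\sigma)$ to the representation in which $G$ permutes an orthonormal basis according to $\sigma$, which is exactly how $\mathcal{M}$ was defined. Given that, the computation is immediate, which is consistent with the statement's description of the lemma as ``obvious but critical.''
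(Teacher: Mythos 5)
Your proof is correct and takes the same route as the paper, which simply notes that the character is the trace of the permutation matrix and that the result follows from the definition of $\mathcal{M}$; you have just spelled out the diagonal-entry computation that the paper leaves implicit. No issues.
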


\begin{proof} The character $\chi(g)$ is exactly the trace of the matrix representing $g$; the result follows trivially from the definition of $\mathcal{M}(X,\sigma)$.
\end{proof}

We may therefore identify the basic permutation representations by taking a representative of every conjugacy class of subgroups of $G$, finding the number of fixed points of the action of each element of $G$ on the corresponding coset spaces, then decomposing the resulting characters using the character table to find the corresponding representations.

\subsection{Existence of RFI teleportation protocols}

Using the results of Subsection \ref{geqobclassification}, we now exhibit a representation for which no $G$-equivariant UEBs exist, and on which quantum teleportation is therefore impossible.

\begin{proposition}
There is no RFI protocol to teleport the state of the 2-dimensional irreducible representation $V$ of $S_3$.
\end{proposition}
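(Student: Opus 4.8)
The plan is to apply the classification from Subsection~\ref{geqobclassification}. By Remark~\ref{gequebsaregeqobs}, any $G$-equivariant UEB on $V$ is in particular a $G$-equivariant orthonormal basis of $\End(V) \simeq V \otimes V^*$. By the lemma characterising $G$-equivariant orthonormal bases, such a basis can exist only if $\End(V)$ is isomorphic to a permutation representation, that is, to $\mathcal{M}(X)$ for some $G$-set $X$. It therefore suffices to show that $\End(V)$ admits no such realisation; this immediately rules out any $G$-equivariant UEB on $V$, and hence any RFI protocol.

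First I would decompose $\End(V)$ into irreducibles by character theory. On the conjugacy classes of the identity, a transposition, and a $3$-cycle, the three irreducible characters of $S_3$ are the trivial $(1,1,1)$, the sign $(1,-1,1)$, and the standard character $\chi_V = (2,0,-1)$. Since $V$ is self-dual, the character of $\End(V) = V \otimes V^*$ is the pointwise square $(4,0,1)$, and taking inner products with the three irreducibles shows that each occurs exactly once: $\End(V) \simeq \text{(trivial)} \oplus \text{(sign)} \oplus V$.

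Next I would list the basic permutation representations, one per conjugacy class of subgroups of $S_3$---namely $\{e\}$, a cyclic group of order two, the alternating group $A_3$, and $S_3$ itself---by the fixed-point formula. Decomposing the resulting characters gives $\mathcal{M}(G/\{e\}) \simeq \text{(trivial)}\oplus\text{(sign)}\oplus 2V$, $\mathcal{M}(G/\mathbb{Z}_2) \simeq \text{(trivial)}\oplus V$, $\mathcal{M}(G/A_3) \simeq \text{(trivial)}\oplus\text{(sign)}$, and $\mathcal{M}(G/S_3) \simeq \text{(trivial)}$. By Lemma~\ref{Gsetdecomp} and additivity of $\mathcal{M}$, every permutation representation is a non-negative integer combination of these four. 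The decisive and hardest step is then purely arithmetic: writing $a_1,\dots,a_4 \ge 0$ for the multiplicities and matching the three irreducible components of $\End(V)$, the coefficient of $V$ forces $2a_1 + a_2 = 1$, hence $a_1 = 0$; the coefficient of the sign representation then forces $a_3 = 1$; but the coefficient of the trivial representation becomes $a_2 + a_3 + a_4 = 1 + 1 + a_4 \ge 2 > 1$, which is impossible. Thus $\End(V)$ is not a permutation representation, and the proposition follows. The entire difficulty is concentrated in this final counting argument---the obstruction is that one cannot realise a single copy each of the trivial, sign, and standard representations simultaneously as a permutation representation of $S_3$.
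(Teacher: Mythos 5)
Your proof is correct and follows essentially the same route as the paper: reduce via Remark~\ref{gequebsaregeqobs} to the nonexistence of a $G$-equivariant orthonormal basis on $\End(V)$, compute the characters of the four basic permutation representations of $S_3$, and show that the character $(4,0,1)$ of $\End(V)$ is not a non-negative integer combination of them. The only difference is that the paper asserts this last step as ``clear'' at the level of character vectors, whereas you carry out the multiplicity bookkeeping explicitly---a welcome filling-in of detail, not a different argument.
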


\begin{proof}
Using the method outlined in Subsection \ref{geqobclassification}, we find that the characters of the basic permutation representations are as follows:
\begin{center}
\begin{tabular}{|l | | l | l | l|}
\hline
Representation / Conjugacy class & () & (1,2) & (1,2,3) \\
\hline
$\mathcal{M}(G/C_1)$ & 6 & 0 & 0     \\
\hline
$\mathcal{M}(G/C_2)$ & 3 & 1 & 0     \\
\hline 
$\mathcal{M}(G/C_3)$ & 2 & 0 & 2     \\
\hline
$\mathcal{M}(G/C_4)$ & 1  & 1  & 1   \\
\hline
\end{tabular}
\end{center}
The character of $V \otimes V^*$ is $4|0|1$, which clearly cannot be composed as a sum of characters of basic permutation representations. By Remark \ref{gequebsaregeqobs}, the result follows. 
\end{proof}

\begin{remark} 
This argument does not extend to all irreducible representations. The endomorphism space of the 2-dimensional irreducible representation of $D_8$, for instance, is a sum of basic permutation representations.
\end{remark}

\subsection{Construction of RFI teleportation protocols}

Although RFI teleportation protocols need not always exist, they can often be constructed. We now demonstrate that, if we can find a $G$-equivariant orthonormal basis on $H$, and a Hadamard matrix which commutes with all $\pi(G)$ in that basis, we can perform RFI teleportation on $H$.

\begin{theorem} \label{Hadamardconstruction}
Let $\ket{v_i}$ be a $G$-equivariant orthonormal basis on $H$. In this basis all $\pi(g)$ will be permutation matrices. Let $H$ be a Hadamard matrix that commutes with all $\pi(g)$ in this basis. Then the following family is a $G$-equivariant UEB: 
\begin{equation}
(U_H)_{ij} = \frac{1}{N}H \circ \diag(H,j)^{\dagger} \circ H^{\dagger} \circ \diag(H^T,i)
\end{equation}
\end{theorem}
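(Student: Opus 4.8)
The plan is to verify the three defining properties in turn: that each matrix in the family is unitary, that the whole family is orthonormal under the Hilbert--Schmidt inner product (so that it is a unitary error basis), and that it is permuted by the conjugation action of $G$ (so that it is $G$-equivariant). Throughout I take the Hadamard matrix to have unimodular entries, so $|H_{ab}| = 1$ and $H H^\dagger = H^\dagger H = N\mathbbm 1$, and I abbreviate the two diagonal matrices in the formula by $D_j := \diag(H,j)$, the diagonal matrix built from the $j$-th column of $H$, and $E_i := \diag(H^T,i)$, built from the $i$-th row, so that $(U_H)_{ij} = \tfrac1N\, H D_j^\dagger H^\dagger E_i$. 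Unitarity is then the quickest step: since $E_i E_i^\dagger = \mathbbm 1$ and $D_j^\dagger D_j = \mathbbm 1$ (both being unimodular diagonal matrices) and $H^\dagger H = N\mathbbm 1$, the product $(U_H)_{ij}(U_H)_{ij}^\dagger$ collapses to $\tfrac1{N^2} H D_j^\dagger (H^\dagger H) D_j H^\dagger = \tfrac1N H H^\dagger = \mathbbm 1$, the factor $\tfrac1N$ being exactly what cancels the two factors of $N$; this also explains the normalisation.

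For the unitary-error-basis property I would compute the Hilbert--Schmidt inner product $\Tr\big((U_H)_{ij}^\dagger (U_H)_{kl}\big)$ directly. Using $H^\dagger H = N\mathbbm 1$ the centre of the product contracts, leaving, up to the scalar $\tfrac1N$, the trace $\Tr\big(E_k E_i^\dagger\, H\, D_j D_l^\dagger\, H^\dagger\big)$ of a product of two diagonal matrices conjugated through $H$. Writing this out entrywise and using $|H_{ab}|^2 = 1$ to remove the modulus-squared factor, the double sum factorises as $\big(\sum_a H_{ka}\overline{H_{ia}}\big)\big(\sum_b H_{bj}\overline{H_{bl}}\big) = (HH^\dagger)_{ki}(H^\dagger H)_{lj} = N^2\,\delta_{ik}\delta_{jl}$. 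Hence $\Tr\big((U_H)_{ij}^\dagger (U_H)_{kl}\big) = N\,\delta_{ik}\delta_{jl}$, which gives orthonormality and, since there are exactly $N^2 = \dim(\End(H))$ of these matrices, the statement that they form a basis.

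The last and conceptually central step is $G$-equivariance, and this is where the hypothesis that $H$ commutes with every $\pi(g)$ is used. In the chosen basis each $\pi(g)$ is the permutation matrix of some $\tau_g$, and I would first record the entrywise form of $\pi(g)H = H\pi(g)$, namely $H_{\tau_g^{-1}(a),b} = H_{a,\tau_g(b)}$. Conjugating a diagonal matrix by a permutation matrix again yields a diagonal matrix whose entries have been relabelled, and the commutation relation says precisely that $\pi(g) D_j \pi(g)^\dagger = D_{\tau_g(j)}$ and $\pi(g) E_i \pi(g)^\dagger = E_{\tau_g(i)}$, i.e. conjugation shifts the column/row index by $\tau_g$. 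Since $\pi(g)$ also commutes with $H$ and (by taking adjoints) with $H^\dagger$, inserting $\pi(g)^\dagger\pi(g) = \mathbbm 1$ between the factors of $(U_H)_{ij}$ gives $\pi(g)(U_H)_{ij}\pi(g)^\dagger = \tfrac1N H D_{\tau_g(j)}^\dagger H^\dagger E_{\tau_g(i)} = (U_H)_{\tau_g(i),\tau_g(j)}$. Thus $G$ permutes the family by $(i,j)\mapsto(\tau_g(i),\tau_g(j))$, which is a genuine action since $\tau_{gh} = \tau_g\tau_h$, establishing $G$-equivariance.

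I expect the only real subtlety to be the bookkeeping in this final step: correctly identifying that conjugation by the permutation matrix acts on $D_j$ and $E_i$ by relabelling their defining index, and checking that the commutation hypothesis converts this into the clean, matching shift $j\mapsto\tau_g(j)$ and $i\mapsto\tau_g(i)$, rather than some mismatched combination of $\tau_g$ on one index and $\tau_g^{-1}$ on the other. Everything else reduces to the two Hadamard identities $|H_{ab}|=1$ and $H H^\dagger = H^\dagger H = N\mathbbm 1$.
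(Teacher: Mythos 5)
Your proof is correct, and your equivariance argument is essentially the paper's: both insert $\pi(g)^\dagger\pi(g)$ between the factors, use the hypothesis $\pi(g)H = H\pi(g)$ to pass the permutation through the Hadamard matrix, and observe that conjugating $\diag(H,j)$ and $\diag(H^T,i)$ by a permutation commuting with $H$ just relabels the defining index. The only difference is that you verify the unitary-error-basis property from scratch, whereas the paper delegates that part to the cited construction of Musto and Vicary; your verification is correct and makes the proof self-contained.
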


\begin{proof}
It was already proved in \cite{Musto2015} that this is a UEB; we therefore need only show that it is $G$-equivariant. Since $H \in C_{U(n)}(G)$ we have that $g U_{ij} g^{\dagger} =  \frac{1}{n} H \circ (g \circ \diag(H,j)^{\dagger} \circ g^{\dagger}) \circ H^{\dagger} \circ (g \circ \diag(H^T,i) \circ g^{\dagger}).$
We see easily that $g \circ \diag(H,j)^{\dagger} \circ g^{\dagger} = g \circ \diag(H^{*},j) \circ g^{\dagger} = \diag(H^{*} \circ g, j)$. Now note that the fact that $H$ commutes with all elements of $G$ means that permuting the columns of $H$ is exactly the same as permuting the rows, since $gH=Hg$ for all $g \in G$. So $\diag(H^{*} \circ g, j) = \diag(g \circ H^{*}, j) = \diag(H^{*}, g \cdot j)$. A similar argument works for $\diag(H^T,i)$.
\end{proof}

\begin{remark}\label{QLSremark}
If the assumptions of Theorem \ref{Hadamardconstruction} are satisfied, it is possible to construct many more $G$-equivariant UEBs using \emph{quantum Latin squares} (QLSs) \cite{Musto2015}; this construction will give $G$-equivariant UEBs provided the linear map defining the QLS is an intertwiner.
\end{remark}

We finish this section with a simple sufficient condition for the existence of tight RFI protocols on systems of dimension less than 5. Firstly we prove a lemma.

\begin{lemma}\label{unitarymatrixconditions}
Let $M$ be a matrix of dimension $\geq 3$ defined by two complex parameters $a$ and $b$, where all entries on the diagonal are $a$, and all other entries are $b$. Let $a=|a|\alpha, b= |b|\beta $ where $\alpha,\beta \in U(1)$ and $|a|,|b| \neq 0$. Then $M$ is unitary precisely when the following conditions are satisfied:

\noindent\begin{minipage}{0.3\linewidth}
\begin{equation}
\label{inequalityfora} \frac{n-2}{n} \leq |a| \leq 1
\end{equation}
\end{minipage}%
\begin{minipage}{0.33\linewidth}
\begin{equation}
 \label{equalityforb} |b|^2 = \frac{1 - |a|^2}{n-1}
\end{equation}
\end{minipage}%
\begin{minipage}{0.36\linewidth}
\begin{equation}
 \Re(\alpha^* \beta) = \frac{2-n}{2} \frac{|b|}{|a|}
\end{equation}
\end{minipage}\par\vspace{0pt}

\end{lemma}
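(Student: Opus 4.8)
The plan is to exploit the simple algebraic form of $M$. Writing $J$ for the $n\times n$ all-ones matrix, the hypothesis says precisely that $M = (a-b)I + bJ$. Since unitarity is equivalent to $M^\dagger M = I$, and since $J^\dagger = J$ and $J^2 = nJ$, I would expand
$$M^\dagger M = |a-b|^2\, I + \big(2\Re(\bar a b) + (n-2)|b|^2\big)\, J.$$
Because $I$ and $J$ are linearly independent for $n \geq 2$, and $J$ is one everywhere while $I$ is one only on the diagonal, the equation $M^\dagger M = I$ splits into an off-diagonal part, namely the vanishing of the coefficient of $J$, together with a diagonal part.

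The off-diagonal condition reads $2\Re(\bar a b) + (n-2)|b|^2 = 0$. Substituting $\bar a b = |a||b|\,\alpha^*\beta$ and dividing by $2|a||b|$, which is legitimate as $|a|,|b|\neq 0$, yields the phase equation $\Re(\alpha^*\beta) = \tfrac{2-n}{2}\tfrac{|b|}{|a|}$. Feeding the vanishing of the $J$-coefficient back into the diagonal entry $|a-b|^2 = 1$, and expanding $|a-b|^2 = |a|^2 - 2\Re(\bar a b) + |b|^2 = |a|^2 + (n-1)|b|^2$, gives $|a|^2 + (n-1)|b|^2 = 1$, which is the norm equation $|b|^2 = (1-|a|^2)/(n-1)$. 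This already establishes that unitarity is equivalent to the conjunction of the norm and phase equations.

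It remains to recover the inequality $\tfrac{n-2}{n}\leq|a|\leq 1$ and to confirm its logical role. Since $\alpha^*\beta\in U(1)$, its real part lies in $[-1,1]$; applying this to the phase equation and squaring gives $(n-2)^2|b|^2 \leq 4|a|^2$. Substituting $|b|^2 = (1-|a|^2)/(n-1)$ and clearing denominators reduces this to $(n-2)^2 \leq |a|^2\big(4(n-1)+(n-2)^2\big)$. The one genuinely arithmetical step is the identity $4(n-1)+(n-2)^2 = n^2$, after which the bound becomes $|a|^2 \geq (n-2)^2/n^2$, that is $|a|\geq(n-2)/n$; the upper bound $|a|\leq 1$ is immediate from the norm equation together with $|b|^2\geq 0$.

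I expect no serious obstacle here: once $M$ is written as $(a-b)I + bJ$ the argument is routine linear algebra. The only point requiring care is the logical status of the inequality. It is not an independent unitarity constraint but rather the realizability condition ensuring that the phase equation admits a solution with $\alpha,\beta$ of unit modulus; consequently, in the backward direction of the equivalence it is automatically implied by the norm and phase equations, and I would state this explicitly when tracing the implications in both directions.
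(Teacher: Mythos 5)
Your proof is correct, and it arrives at exactly the same two scalar equations as the paper's proof---the normalization condition $|a|^2+(n-1)|b|^2=1$ and the orthogonality condition $2\Re(a^*b)+(n-2)|b|^2=0$---and then extracts the lower bound on $|a|$ in the identical way, from $|\Re(\alpha^*\beta)|\le 1$ together with the identity $4(n-1)+(n-2)^2=n^2$. Where you differ is in how unitarity is reduced to those equations, and in the logical bookkeeping. The paper checks orthonormality of the rows directly, asserting that by symmetry it suffices to normalize one row and orthogonalize one pair of rows; your decomposition $M=(a-b)I+bJ$ with $J^2=nJ$ achieves the same reduction by a short closed computation of $M^\dagger M$, which is tidier and removes the appeal to symmetry. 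More substantively, the paper frames the inequality $\frac{n-2}{n}\le|a|\le 1$ as the condition for the two equations to be \emph{soluble} for $b$ given $|a|$ (which is what its application in the subsequent theorem requires), and so never quite addresses the literal biconditional of the lemma statement; your closing observation---that the inequality is not an independent unitarity constraint but is implied by the norm and phase equations, so that listing it alongside them does not disturb the equivalence---is precisely the clarification needed to read the lemma as stated. Both routes buy the same result; yours is marginally more self-contained and logically explicit, while the paper's solubility framing is oriented toward how the lemma is actually used.
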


\begin{proof}
For unitarity it is sufficient that the rows form an orthonormal basis. It is clear from the symmetry of $Q$ that it is sufficient for one row vector to be normal, and one pair of row vectors to be orthogonal. This gives us two equations in $a$ and $b$:
\begin{align}
|b|^2 &= \frac{1 - |a|^2}{n-1}  \label{unq1} \\
\Re(a^*b) &= \frac{2-n}{2} |b|^2. \label{unq2}   
\end{align}
We will demonstrate that (\ref{inequalityfora}) is necessary and sufficient  for us to find $b$ satisfying these equations. It is obvious that (\ref{unq1}) is satisfiable if and only if $|a| \leq 1$.
Letting $a=|a|\alpha, b= |b|\beta$, Equation (\ref{unq2}) becomes 
$$
\Re(\alpha^* \beta) = \frac{2-n}{2} \frac{|b|}{|a|}.
$$
Since $-1 \leq \Re(\alpha^* \beta) \leq 1$ and $\alpha, \beta$ can be freely adjusted to give $ Re(\alpha^* \beta)$ any value in that range, we see that the following is necessary and sufficient for (\ref{unq2}) to be soluble:
$$ \frac{(2-n)^2}{4} \frac{|b|^2}{|a|^2} \leq  1 $$
Use of the identity ($\ref{unq1}$) and a short calculation demonstrates that this is equivalent to the lower bound in the inequality (\ref{inequalityfora}).
\end{proof}

\begin{theorem}
Suppose $H$ admits a $G$-equivariant orthonormal basis, and is of dimension less than 5. Then there exists a RFI teleportation protocol for $H$.
\end{theorem}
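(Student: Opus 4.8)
The plan is to reduce the statement to the Hadamard construction of Theorem~\ref{Hadamardconstruction}, and then to solve the resulting constraint using Lemma~\ref{unitarymatrixconditions}. Since RFI teleportation protocols correspond exactly to $G$-equivariant unitary error bases (as established above), and since Theorem~\ref{Hadamardconstruction} shows that a $G$-equivariant orthonormal basis on $H$ together with a Hadamard matrix commuting with all $\pi(g)$ yields such a UEB, it suffices to produce, for the given $G$-equivariant orthonormal basis, a complex Hadamard matrix that commutes with every $\pi(g)$. Such a basis exists by hypothesis, and in it every $\pi(g)$ is a permutation matrix, so the task becomes purely that of finding a suitable Hadamard matrix in the commutant of a group of permutation matrices.

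The key observation is that I need not know which permutations actually occur. I would look for a Hadamard matrix $K$ of the special symmetric form treated in Lemma~\ref{unitarymatrixconditions}, carrying a common value $a$ on the diagonal and a common value $b$ off the diagonal. Any such matrix commutes with \emph{every} permutation matrix, since conjugation by a permutation matrix fixes both $\mathbbm{1}$ and the all-ones matrix $J$, and hence fixes $K = a\mathbbm{1} + b(J - \mathbbm{1})$; in particular it commutes with all $\pi(g)$ regardless of the group action. Thus $G$-equivariance comes for free, and the only genuine requirement is that $K$ really be a complex Hadamard matrix, that is, that $\tfrac{1}{\sqrt n}K$ be unitary with all entries of modulus $1/\sqrt n$, where $n = \dim H$.

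This is exactly the situation analysed in Lemma~\ref{unitarymatrixconditions} applied to $M = \tfrac{1}{\sqrt n}K$. Imposing $|a| = |b| = 1/\sqrt n$ so that every entry of $K$ is unimodular, Equation~(\ref{equalityforb}) holds automatically, while both the inequality~(\ref{inequalityfora}) and the phase condition $\Re(\alpha^*\beta) = \tfrac{2-n}{2}$ become solvable precisely when $|2 - n| \le 2$, i.e.\ when $n \le 4$. For $n = 3$ and $n = 4$ the lemma then supplies admissible phases $\alpha, \beta$; the cases $n = 1$ and $n = 2$ lie below the hypothesis of the lemma but are handled directly (for instance the $1\times 1$ identity, and the matrix with $1$ on the diagonal and $i$ off the diagonal). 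Since $\dim H < 5$ by assumption, a commuting Hadamard matrix exists in every case, and Theorem~\ref{Hadamardconstruction} then delivers the required $G$-equivariant UEB, hence an RFI protocol for $H$.

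The main obstacle is conceptual rather than computational: recognising that the dimension bound is precisely the threshold at which a Hadamard matrix of this symmetric, automatically-equivariant form ceases to exist. At $n = 5$ the orthogonality of two rows would force $\Re(\alpha^*\beta) = -\tfrac{3}{2}$, which is impossible, so the construction genuinely fails there. The two points requiring care are verifying that the symmetric form commutes with all permutations, and checking that unimodularity forces $|a| = |b| = 1/\sqrt n$ and thereby reduces the conditions of Lemma~\ref{unitarymatrixconditions} exactly to the constraint $n \le 4$.
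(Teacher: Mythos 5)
Your proposal is correct and follows essentially the same route as the paper: reduce to Theorem~\ref{Hadamardconstruction}, observe that a matrix with constant diagonal $a$ and constant off-diagonal $b$ lies in the commutant of all permutation matrices (the paper identifies this as the full centraliser $C_{M_n}(S_n)$, covering the worst case $\pi(G)=S_n$), and then apply Lemma~\ref{unitarymatrixconditions} with $|a|=|b|=1/\sqrt{n}$ to see that the inequality~(\ref{inequalityfora}) is satisfiable exactly when $n\le 4$. Your handling of the degenerate cases $n=1,2$ and the remark that the construction genuinely fails at $n=5$ also match the paper's treatment.
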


\begin{proof}
We construct a $G$-equivariant UEB for $H$.
Expressed in the $G$-equivariant orthonormal basis, $\pi(G)$ will be some subgroup of the permutation matrices $S_n$. To use Theorem \ref{Hadamardconstruction}, we must find a Hadamard matrix commuting with $\pi(G)$. In the worst case, $\pi(G)$ will be the whole group $S_n$ of permutation matrices. (This situation is realised for the representation $1 \oplus V$ of $\mathfrak{S}_n$, where $V$ is the fundamental $(n-1)$-dimensional representation of $\mathfrak{S}_n$).

We will demonstrate that, when $H$ is of dimension less than 5, we can find a Hadamard matrix which commutes with all the permutation matrices. First we eliminate the degenerate cases $n=1$ and $n=2$. Clearly for $n=1$ we can perform RFI teleportation by Proposition \ref{tightoldteleportationistrivial}, For $n=2$ the following family of Hadamard matrices commutes with $S_2$, where $|a|=|b|=1/\sqrt{2}$ and $Re (a^*b)=0$:
$$
\begin{pmatrix}
a & b \\
b & a
\end{pmatrix}
$$
From now on we may therefore assume $n \geq 3$.

It is easy to see that the centraliser $C_{M_n}(S_n) \subset M_n$ is the set of matrices defined by two complex parameters $a$ and $b$, where all entries on the diagonal are $a$, and all other entries are $b$. 
The conditions necessary for such a matrix to be unitary were given in Lemma \ref{unitarymatrixconditions}.
Setting $|a|=|b|$ in (\ref{equalityforb}), it follows that
$|a| = 1/\sqrt{n}.$ This is compatible with (\ref{inequalityfora}) only for $n \leq 4$.
\end{proof}

\bibliographystyle{eptcs}
\bibliography{FiniteGroupTeleportation}

\end{document}